\newcommand{\mechanism}{\mathcal{M}}
\newcommand{\expectation}[2][\pi]{\mathbb{E}_{#1}[#2]}
\newcommand{\localv}[1][S]{v_{| #1}}
\newcommand{\localpi}[1][S]{\pi_{| #1}}
\newcommand{\share}[4]{\phi_{#1}^{#2}(#3,#4)}
\newcommand{\mcp}{\hat{S}}
\newcommand{\previous}[2]{p^{#1}(#2)}
\DeclareMathOperator{\EW}{EW}
\DeclareMathOperator{\SV}{SV}
\DeclareMathOperator{\MC}{MC}
\DeclareMathOperator{\CR}{CR}
\DeclareMathOperator{\SD}{SD}
\DeclareMathOperator{\WVS}{WVS}
\DeclareMathOperator{\EGM}{EVS}
\newtheorem{definition}{Definition}
\newtheorem{example}{Example}
\newtheorem{theorem}{Theorem}
\newtheorem{proposition}{Proposition}
\newtheorem{observation}{Observation}
\title{Fair Incentives for Early Arrival in 0-1 Cooperative Games}
\author {
    Yaoxin Ge\textsuperscript{\rm 1},
    Yao Zhang\textsuperscript{\rm 2},
    Dengji Zhao\textsuperscript{\rm 1}
}
\begin{document}

\maketitle

\begin{abstract}
Incentives for early arrival (I4EA) was recently proposed for studying online cooperative games. In an online cooperative game, players arrive in an unknown order, and the value increase after each player arrived should be distributed immediately among all the arrived players. Although there is only one arriving order in the game, we also hope that the value distribution is equal to their Shapley value in expectation. To achieve these goals, the early solutions ignored the fairness in each single arriving order. More specifically, an important player may receive nothing in a game, which seems unfair in reality. To combat this, we propose refined fairness in this paper and design new solutions in 0-1 value games. Specifically, we compute the distance of the distribution in each order to the Shapley value and aim to minimize it. We propose a new mechanism called Egalitarian Value-Sharing (EVS) to do so. We also show that the mechanism can maximize the egalitarian welfare among all the players who made contributions.

\end{abstract}


\section{Introduction}

Compared to statistical models, online models that allow agents to enter the market at different times are more suitable for many real-world scenarios where instant and irrevocable decisions must be made~\cite{simon1945theory}. This has attracted a substantial body of literature focusing on online economic markets, including auctions~\cite{ockenfels2006online,bergemann2010dynamic}, matching~\cite{huang2020fully,huang2024online}, hedonic games~\cite{bullinger2023online,flammini2021online}, and other related topics~\cite{porter2004mechanism,borodin2005online}. Such motivations can also be extended to cooperative games, whose objective is to fairly allocate value among a set of collaborating players. In online scenarios, it is impractical to determine value distribution only after the grand coalition is fully formed (for example, when forming a startup, it is impossible to know in advance when all potential participants will have joined).


\citet{ge2024incentives} were the first to formalize the model of online cooperative games, identifying two primary concerns within this framework: \textit{incentives for early arrival} and \textit{Shapley-fairness}. The former ensures that players are motivated to join the coalition as early as possible, which is crucial for practical applications to prevent the project from stalling due to potential participants waiting for one another. The latter is a fairness axiom requiring that the expected share of each player, averaged over all possible arrival orders, is equal to their Shapley value—a classic solution concept that satisfies most basic and intuitive fairness axioms~\cite{shapley1953value,winter2002shapley}. They achieve both objectives by decomposing the game into 0-1 monotone games and applying the \textit{Rewarding the First Critical Player} (RFC) method, which allocates rewards exclusively to the first player whose participation is indispensable for generating the coalition’s value.


Although the RFC method satisfies the axiom of Shapley fairness, its sharing scheme still exhibits several notable limitations. The most significant issue is that, for any specific arrival order, the resulting value allocation may deviate substantially from the corresponding Shapley value. This drawback poses practical concerns, since in real-world scenarios only a single arrival order is actually realized. To address this issue, we require a property that emphasizes fairness with respect to a single joining order, and propose a new value allocation mechanism based on this property.



Fortunately, we observe that although the RFC scheme is proven to satisfy both incentives for early arrival and Shapley-fairness, it is not the only such mechanism.  This observation suggests that it may not be necessary to sacrifice current properties in order to achieve a more balanced allocation in each possible arrival order. Our work seeks to address the following question: \textit{Can we design a fairer allocation for each possible arrival order without compromising Shapley-fairness?}


The main reason a Shapley-fair scheme does not guarantee balanced allocations in every order is that, although the expected share matches the Shapley value, the shares assigned in specific orders can deviate significantly from it. To address this, we introduce the Euclidean ($\ell_2$) distance between the Shapley value and the allocation in each specific order, which we refer to as the Shapley distance. This metric enables us to explore various approaches to improve Shapley-fairness at the level of individual orders. In this work, we focus on minimizing the expected Shapley distance across all possible arrival orders.


One theoretical advantage of adopting the above objective is that, in the context of decomposed 0-1 monotone games, it can be reformulated as maximizing egalitarian welfare—that is, increasing the minimum share among all contributing players—subject to an intuitive constraint: among players with symmetric contributions, those who arrive earlier should receive a larger share. The notion of egalitarian welfare is particularly natural in 0-1 games, as all contributing players are symmetric in the moment when the coalition value increases from 0 to 1.


Finally, our contributions can be summarized as follows:
\begin{itemize}
\item We propose a generalized class of allocation policies for online cooperative games that satisfy both incentives for early arrival and Shapley-fairness. Unlike previous approaches, our policies can allocate non-zero shares to all contributing players in each order.
\item We introduce new concepts for evaluating fairness in the implementation of specific arrival orders, including Shapley distances and egalitarian welfare. We further show that, in 0-1 games, there exists a single  mechanism that simultaneously maximizes both objectives, subject to the constraint that players with symmetric contributions who arrive earlier receive larger shares.
\end{itemize}

The remainder of the paper is organized as follows. In \textbf{Preliminaries}, we introduce the setting and summarize the theoretical results from~\citet{ge2024incentives}, including the definition of online cooperative games, the formalization of properties, and the RFC mechanism. In \textbf{New Fairness}, we present our new fairness notions for specific orders and further refine them in the context of 0-1 monotone games. In \textbf{The Mechanism for New Fairness}, we describe our proposed class of mechanisms, analyze their properties, and discuss the minimization of the expected Shapley distance. We offer formal proofs of the theoretical results in \textbf{The Proofs}.

\subsection{Related Work}

\subsubsection{Incentives for Early Arrival}

In addition to \citet{ge2024incentives}, \citet{zhang2025incentives} investigated incentives for early arrival in the context of cost-sharing problems. Adopting the previous requirements, but with the valuation replaced by a cost function, they designed a mechanism that solves all cost-sharing games.
\citet{aziz2025participation} abandoned the requirement of Shapley-fairness and instead explored other axioms, such as incentives to stay and incentives to participate which ensure that contributing players receive a positive share.
\citet{zhao2025incentives} provided an in-depth discussion of I4EA and emphasizes its significance across a variety of domains. However, they did not address the question of how to formulate in-order fairness requirements.

\subsubsection{Online Coalition Formation} 


There is a body of literature focusing on online coalition formation, where players arrive sequentially and the objective is to partition them in a way that maximizes social welfare~\cite{bullinger2023online, flammini2021online}. In these settings, agents derive utility from their preferences over coalition members. \citet{cohen2025egalitarianism} further introduce the challenge of maximizing egalitarian welfare, i.e., the minimum utility among all participants. \citet{bullinger2025stability} aim to achieve stable coalition structures in online environments, addressing common stability concepts based on deviations by individual agents or groups of players. In contrast to these works, our setting assumes that all players form a single coalition, and our goal is to determine a fair value allocation that can incentivize the players to join as soon as possible.

\subsubsection{Online Mechanism Design}


We also review previous work on mechanism design in online settings. \citet{2017Dynamic} and \citet{pavan2014dynamic} focus on designing auction mechanisms for environments where players with private valuations arrive over time and may change their valuations. Another line of research concerns online fair division, as studied in~\citet{elkind2025temporal, aleksandrov2020online, amanatidis2023fair}, where typically a fixed set of players participate and the items to be allocated arrive sequentially. In contrast to these studies, our work primarily addresses incentives for early arrival, which are crucial to ensuring player participation.

\section{Preliminaries}

We introduce the formal setting of online cooperative games in this section. The main notations follow the initial work by~\citet{ge2024incentives}, and we also give a brief and necessary introduction of their \emph{Rewarding the First Critical Player (RFC)} mechanism.

\subsection{The Model}
A transferable-utility game is denoted by a pair $(N,v)$, where $N$ is the set of players and $v: 2^N \rightarrow \mathbb{R}_{\geq 0}$ is the \emph{valuation function}. For any coalition $S\subseteq N$, $v(S)$ is interpreted as the value that can be produced by $S$, and we assume $v(\emptyset) = 0$. For two players $i,j\in N$, we say $i,j$ are \emph{symmetric} if $\forall S\subseteq N\setminus\{i,j\}$, $v(S\cup\{i\}) = v(S\cup\{j\})$. In this paper, we focus on \emph{0-1 monotone games}\footnote{In~\citet{ge2024incentives}, the authors have shown that a mechanism on 0-1 monotone games can be generalized through the decomposition of games and linear properties can be maintained.} where $v(S)\in\{0,1\}$ for any $S\subseteq N$, and $v(T)\leq v(S)$ for any $T\subseteq S$.

In an online cooperative game, there is an additional element of some order $\pi \in \Pi(N)$, where $\Pi(N)$ is the set of all permutations of $N$. The players join sequentially following the order $\pi$. If player $i$ joins before player $j$ in order $\pi$, we denote it as $i\prec_\pi j$. Moreover, we denote $\previous{\pi}{i}$ as the players in $\pi$ who arrive before $i$ and herself.

We say a set of players $S$ is a \emph{prefix} of $\pi$, denoted as $S\sqsubseteq \pi$, if the players in $S$ are the first $|S|$ players in $\pi$. Then, the local game on prefix $S$ consists of a local valuation function $\localv:2^S\rightarrow \mathbb{R}_+$ and a local order $\localpi\in\Pi(S)$, where $\localv(T) = v(T)$ for any $T\subseteq S$ and $i\prec_{\localpi} j \Leftrightarrow i\prec_\pi j$ for any $i,j\in S$. Overall, $(v,\pi)$ describes an online game in which a set of players joins following the order $\pi$ and cooperates following the valuation function $v$, while before all players have joined, $(\localv,\localpi)$ describes the local state that only the players in $S$ have arrived. Moreover, in each local state, only the information about the local game is accessible.
Our goal is to design a value-sharing policy that not only determines a value share for $(v,\pi)$, but also includes value shares for any $(\localv,\localpi)$\footnote{We omit the player set in this tuple because the subscript indicates the set of players and, when there is no subscript, the set of players is referred to by default as $N$.}.

\begin{definition}[Mechanism]
\label{def:mechanism} 
A \textbf{value-sharing policy} $\phi$ maps a game $(v,\pi)$ to an $|N|$-tuple of allocations, so that $\share{i}{}{v}{\pi} \geq 0$ is player~$i$'s share of the value, and $\sum_i \share{i}{}{v}{\pi} = v(N)$.

An \textbf{online value-sharing mechanism} 
 is given by a policy $\phi$, so that after the arrival of each prefix $S \sqsubseteq \pi$, each player $i \in S$ gets a (cumulated) share of $\share{i}{}{\localv}{\localpi}$.
\end{definition}

To keep the players staying in the cooperation, we require the players' shares to be non-decreasing through other players' arrival. Moreover, to prevent the players from delaying their arrival strategically, we require a player's share to be non-increasing when choosing to delay, assuming the order of others to be fixed.

\begin{definition}[OIR]
\label{def:OIR}
    An online mechanism is \textbf{online individually rational (OIR)}  for value function $v$ if for any arrival order $\pi$ and any $T, S \sqsubseteq \pi$ with $T \subseteq S$, we have $\share{i}{}{\localv[T]}{\localpi[T]}\leq \share{i}{}{\localv}{\localpi}$ for every player $i \in T$.
\end{definition}

 \begin{definition}[I4EA]
     \label{def: I4EA}
      An online mechanism is \textbf{incentivizing for early arrival (I4EA)}  
      if for any player $i$, $\share{i}{}{v}{\pi}\geq \share{i}{}{v}{\pi'}$ for all $\pi$ and $\pi'$ such that $\localpi[N \setminus \{i\}] = \pi'_{|N \setminus \{i\}}$
      and $p^{\pi}(i)\subset p^{\pi'}(i)$.
 \end{definition}

Another desired property is fairness. Here we follow the definition of the \emph{Shapley-fair} proposed in~\citet{ge2024incentives}, and in the next section we give our new fairness properties. Shapley-fairness requires that the expected share of a player equals \emph{Shapley value} when $\pi$ is uniformly selected at random. Intuitively, the Shapley value can be regarded as the expectation of a mechanism which rewards each player with her marginal contribution under uniformly selected $\pi$. We give formal definitions of these concepts as follows.

\begin{definition}
    Given $(v,\pi)$, the \textbf{marginal contribution} of a player $i$ is 
    $\MC_i(v,\pi) := v(\previous{\pi}{i}) - v(\previous{\pi}{i}\setminus \{i\}).$
\end{definition}

\begin{definition}[\citet{shapley1953value}]
    Given $v$, the \textbf{Shapley value} of a player $i$ in $(N,v)$ is 
    $$\SV_i(v):=\frac{1}{|N|!}\sum_{\pi\in \Pi(N)}\MC_i(v,\pi).$$
\end{definition}

\begin{definition}[SF]
    Given $v$, an online mechanism is \textbf{Shapley-Fair (SF)} if for any player $i$, 
    $$\frac{1}{|N|!}\sum_{\pi\in \Pi(N)}\share{i}{}{v}{\pi} = \SV_i(v).$$
\end{definition}

\subsection{The Early Solutions}
In~\citet{ge2024incentives}, the authors demonstrate that there exist certain unsolvable games for which no mechanism can simultaneously satisfy SF, OIR, and I4EA. They further characterize the class of solvable games, as detailed in the following, and propose a complete mechanism called \emph{Rewarding the First Critical Player (RFC)}. 
RFC is formalized in Mechanism~\ref{alg: RFC}
Intuitively, in an online 0-1 monotone game, there is at most one player with a marginal contribution of $1$ (referred to as the \emph{marginal player}), while all other players have marginal contributions of $0$. If the value distribution is based solely on marginal contributions, those players who are essential for value creation—termed \emph{critical players}—may strategically delay their actions to become the marginal player. The RFC mechanism addresses this by awarding the value to the first critical player, thereby ensuring I4EA in most games. However, in unsolvable games, there exists a  ``super-complementary'' player who, despite having an individual value of $0$, can become the unique critical player through delayed participation.

\begin{table}[tb]
\centering
\begin{tabular}{|c|c|c|c|}
\hline
Order & \begin{tabular}[c]{@{}l@{}}Critical\\ Players\end{tabular} & \begin{tabular}[c]{@{}l@{}}Marginal\\ Player\end{tabular} & \begin{tabular}[c]{@{}l@{}}RFC\\ Share\end{tabular} \\ \hline
A-B-C & A,B                                                        & B                                                         & A$\leftarrow$1                                                 \\ \hline
A-C-B & A,C                                                        & C                                                         & A$\leftarrow$1                                                 \\ \hline
B-A-C & B,A                                                        & A                                                         & B$\leftarrow$1                                                 \\ \hline
B-C-A & A                                                          & A                                                         & A$\leftarrow$1                                                 \\ \hline
C-A-B & C,A                                                        & A                                                         & C$\leftarrow$1                                                 \\ \hline
C-B-A & A                                                          & A                                                         & A$\leftarrow$1                                                 \\ \hline
\end{tabular}
\caption{The critical players, marginal player and the share determined by RFC of the game in Example~\ref{eg: RFC example}. Each row corresponds to a joining order listed in the first column.}
\label{tb: RFC eg}
\end{table}

\begin{definition}
   Given a 0-1 monotone $v$ and some order $\pi$, if there exists $S\sqsubseteq\pi$ that $v(S)=1$, then the \textbf{marginal player} is some $i\in S$ satisfying
   $$v(\previous{\pi}{i}) = 1 \text{ and }v(\previous{\pi}{i}\setminus\{i\})=0;$$
   and the \textbf{critical players} are defined as
   $$\CR(v,\pi) = \{j\in\pi \mid v(\previous{\pi}{i}\setminus\{j\}) = 0\}.$$
   Also note that the marginal player is the last critical player.
\end{definition}

    

\begin{algorithm}[tb]
    \renewcommand{\algorithmicrequire}{\textbf{Input:}}
    \renewcommand{\algorithmicensure}{\textbf{Output:}}
    \floatname{algorithm}{Mechanism}
     \caption{Rewarding the First Critical Player (RFC)}
    \label{alg: RFC}
    \begin{algorithmic}[1]
    \REQUIRE $(v,\pi)$
    \ENSURE $\phi\leftarrow\{\phi_j\}_{j\in \pi}$
    \STATE Initialize $\phi_j\leftarrow 0,\forall j\in\pi$.
    \IF{$|\CR(v,\pi)|>0$}
        \STATE $i\leftarrow\text{the first player in }\CR(v,\pi)$.
        \STATE $\phi_i\leftarrow 1$.
    \ENDIF
    
    %

    
    \end{algorithmic}
\end{algorithm}


\begin{definition}[Solvability]
\label{def: solvable}
   A 0-1 monotone $v$ is \textbf{unsolvable} if there exists a player $i$ such that $v(\{i\}) = 0$ and $\exists S, \{i\} = \{j\in S\mid v(S) = 1 \text{ and } v(S\setminus \{j\}) = 0\}$. A 0-1 monotone $v$ is solvable in contrast.
\end{definition}

It has been pointed out that only on solvable games, there exists a mechanism that is OIR, SF and I4EA. RFC satisfies these properties on all solvable games. Example~\ref{eg: RFC example} illustrates how RFC operates on an unsolvable game.


\begin{example}
   \label{eg: RFC example}  
   Consider a 0-1 monotone game with $N=\{A,B,C\}$, and $v(S)=1$ only if $\{A,B\}\subseteq S$ or $\{A,C\}\subseteq S$. The marginal player, critical players, and the share of RFC are shown in Table~\ref{tb: RFC eg}. The Shapley value of $A,B,C$ are $4/6,1/6,1/6$ respectively. Also note that this game is unsolvable. $A$ is the ``super-complementary'' player in $\{A,B,C\}$ who can delay to be the unique critical player, e.g., from order $[B,A,C]$ to $[B,C,A]$.
\end{example}

At the end of this section, we provide some useful observations of the critical players and marginal players as follows. Given $(v,\pi)$, let $i$ be the marginal player, then
\begin{observation}
\label{ob: CR unchanged}
    $\forall j\in\previous{\pi}{i}\setminus{\{i\}}$, if $j$ moves her position in order but still arrives before $i$, then the critical players are unchanged.
\end{observation}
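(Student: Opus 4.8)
The plan is to reduce the statement to a single structural fact: the critical-player set depends only on the \emph{set} of players arriving up to and including the marginal player, not on their internal ordering. Writing $i$ for the marginal player and $P = \previous{\pi}{i}$ for the players who arrive before $i$ together with $i$ herself, the definition gives $\CR(v,\pi) = \{k \mid v(P\setminus\{k\})=0\}$, which is manifestly a function of $P$ alone. So it suffices to show that repositioning $j$ leaves $P$ invariant and keeps $i$ the marginal player.

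First I would establish a purely combinatorial fact: if $j\in\previous{\pi}{i}\setminus\{i\}$ is moved to a new position while still arriving before $i$, then the set of players arriving before $i$ is unchanged. Let $\pi'$ denote the new order. Repositioning $j$ among the players that precede $i$ only permutes those players; no player crosses $i$ in either direction, and $j$ itself stays before $i$ by hypothesis. Hence $\previous{\pi'}{i} = \previous{\pi}{i} = P$ as sets.

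Next I would verify that $i$ is still the marginal player in $\pi'$. Every prefix of $\pi'$ strictly before $i$ is a subset of $P\setminus\{i\}$, so by monotonicity together with $v(P\setminus\{i\})=0$ each such prefix has value $0$; meanwhile $v(\previous{\pi'}{i}) = v(P) = 1$. Thus the value first jumps from $0$ to $1$ exactly upon $i$'s arrival, so $i$ remains the marginal player. (Note this is consistent with $i\in\CR$, as $v(P\setminus\{i\})=0$.)

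Combining the two steps, $\previous{\pi'}{i}=P$, and therefore $\CR(v,\pi') = \{k \mid v(P\setminus\{k\})=0\} = \CR(v,\pi)$. The only step requiring genuine care is confirming that $i$ stays the marginal player---that is, that $j$'s repositioning does not let some earlier coalition reach value $1$---but this is handled immediately by monotonicity, since all prefixes preceding $i$ remain subsets of $P\setminus\{i\}$ and hence have value $0$.
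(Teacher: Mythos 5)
Your proof is correct. The paper states this observation without any proof, so there is nothing to compare against; your argument supplies exactly the intended justification: $\CR(v,\pi)$ is by definition a function of the set $\previous{\pi}{i}$ alone, repositioning $j$ among the players preceding $i$ leaves that set fixed, and the check that $i$ remains the marginal player (via monotonicity and $v(\previous{\pi}{i}\setminus\{i\})=0$) is the one step that genuinely needs to be said, which you say.
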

\begin{observation}
\label{ob: CR decrease or increase}
    If $i$ delays, the set of critical players becomes a subset of the original one and if $i$ arrives earlier, the set of critical players becomes a superset of the original one until $i$ is no longer the marginal player.
\end{observation}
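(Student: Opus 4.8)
The plan is to observe that the set of critical players depends only on the prefix set $\previous{\pi}{i}$ determined by the marginal player $i$, and that this set is \emph{anti-monotone} in that prefix: enlarging the prefix can only delete critical players, while shrinking it can only add them. Everything then reduces to tracking how the prefix set changes as $i$ moves.

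First I would record two elementary facts. Writing $P:=\previous{\pi}{i}$, by definition of the marginal player $v(P)=1$ and $v(P\setminus\{i\})=0$; moreover a player $j\notin P$ can never be critical, since then $P\setminus\{j\}=P$ and $v(P)=1$. Hence $\CR(v,\pi)\subseteq P$ and $\CR(v,\pi)=\{j\in P\mid v(P\setminus\{j\})=0\}$. Next I would translate the two moves into set inclusions: if $i$ delays while the relative order of the other players is fixed, every player that formerly arrived after $i$ and now precedes it enters the prefix, so the new prefix $Q:=\previous{\pi'}{i}$ satisfies $P\subseteq Q$; symmetrically, if $i$ arrives earlier the prefix can only lose players, giving $Q\subseteq P$.

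The heart of the argument is a single monotonicity lemma. Suppose $i$ is the marginal player in both orders, with prefixes $P\subseteq Q$ (each of value $1$, and of value $0$ after deleting $i$). Take any $j$ critical with respect to $Q$, i.e.\ $v(Q\setminus\{j\})=0$. Since $P\setminus\{j\}\subseteq Q\setminus\{j\}$, monotonicity forces $v(P\setminus\{j\})=0$; and $j\in P$, for otherwise $P\setminus\{j\}=P$ would have value $1$. Hence $j$ is critical with respect to $P$, so a \emph{larger} prefix always yields a \emph{smaller} critical set. For the delay the new prefix $Q$ is larger than $P$, giving $\CR(v,\pi')\subseteq\CR(v,\pi)$; for the advance it is smaller, giving $\CR(v,\pi)\subseteq\CR(v,\pi')$—precisely the two claimed inclusions.

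The one place needing care, which I expect to be the main obstacle, is the clause \emph{until $i$ is no longer the marginal player}: the lemma presupposes that $i$ is still marginal in $\pi'$. When $i$ delays, the newly preceding players may already push the value to $1$, so that $v(Q\setminus\{i\})=1$ and the marginal player changes; when $i$ advances, the shrunken prefix may satisfy $v(Q)=0$, so $i$ no longer triggers the value. I would make explicit that so long as $i$ stays marginal neither boundary event has occurred, whence the prefix inclusions and the lemma apply verbatim, and that the first such event is exactly where the stated monotone behaviour is permitted to stop. Finally I would remark that Observation~\ref{ob: CR unchanged} is the degenerate case $P=Q$ of the same lemma, since relocating a player $j\in P\setminus\{i\}$ within the prefix leaves the set $P$, and hence $\CR(v,\pi)$, unchanged.
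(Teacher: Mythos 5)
Your proposal is correct. The paper states this observation without any proof, so there is nothing to compare against; your anti-monotonicity lemma (critical w.r.t.\ a larger prefix implies critical w.r.t.\ a smaller one, by monotonicity of $v$ applied to $v(P\setminus\{j\})\leq v(Q\setminus\{j\})=0$) is exactly the right argument, and you correctly isolate the one delicate point—that both inclusions are only guaranteed while $i$ remains the marginal player, which is precisely what the ``until'' clause signals.
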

\begin{observation}
\label{ob: CR sym}
    The critical players are symmetric to each other in local game $\localv[\previous{\pi}{i}]$.
\end{observation}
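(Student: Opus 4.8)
The plan is to unfold the definition of symmetry directly and reduce it to a one-line monotonicity bound. Write $P = \previous{\pi}{i}$ for the prefix ending at the marginal player $i$, so by definition $v(P) = 1$, and a player $j$ is critical exactly when $v(P \setminus \{j\}) = 0$. Recall that two players $j,k \in P$ are symmetric in the local game $\localv[\previous{\pi}{i}]$ precisely when $v(S \cup \{j\}) = v(S \cup \{k\})$ for every $S \subseteq P \setminus \{j,k\}$, since $\localv[\previous{\pi}{i}]$ agrees with $v$ on all subsets of $P$. So, fixing two critical players $j,k$, I would aim to show that both sides vanish for every such $S$.

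The key device I would exploit is a containment trick. For any $S \subseteq P \setminus \{j,k\}$, adding $j$ keeps us inside $P \setminus \{k\}$, that is $S \cup \{j\} \subseteq P \setminus \{k\}$, because $S$ omits both $j$ and $k$. Since $k$ is critical, $v(P \setminus \{k\}) = 0$, and monotonicity of $v$ then forces $v(S \cup \{j\}) \leq v(P \setminus \{k\}) = 0$, hence $v(S \cup \{j\}) = 0$ as values are non-negative. By the symmetric argument with the roles of $j$ and $k$ swapped, $S \cup \{k\} \subseteq P \setminus \{j\}$, and criticality of $j$ gives $v(S \cup \{k\}) = 0$. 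Both sides equal $0$, establishing $v(S \cup \{j\}) = v(S \cup \{k\})$ and therefore symmetry in $\localv[\previous{\pi}{i}]$.

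I do not anticipate a genuine obstacle here: the statement collapses to the above monotonicity estimate once the definitions are in place. The only point demanding minor care is the set-membership bookkeeping, namely verifying that $S \cup \{j\}$ really lands in $P \setminus \{k\}$ (and vice versa), which hinges on $S$ avoiding both critical players — precisely the range over which symmetry quantifies. A secondary subtlety is to read criticality inside the prefix $P$ rather than the full player set, so that $v(P \setminus \{j\}) = 0$ is exactly the defining condition; this keeps the entire argument confined to the local game, as the statement requires.
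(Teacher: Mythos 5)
Your proof is correct. The paper states this as an unproven Observation, so there is no official argument to compare against; your monotonicity argument --- that for critical $j,k$ and any $S \subseteq \previous{\pi}{i} \setminus \{j,k\}$ one has $S \cup \{j\} \subseteq \previous{\pi}{i} \setminus \{k\}$, whence $v(S\cup\{j\}) \le v(\previous{\pi}{i}\setminus\{k\}) = 0$ and symmetrically for $k$ --- is exactly the justification the authors implicitly rely on, and it is complete.
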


\section{New Fairness}
Although the property of SF guarantees fairness in an expectation manner, the exact implementation in a specific order may not feel so fair. To overcome this problem, we propose two new considerations. For symmetric players, we ask the mechanism to determine a weakly higher share for the player arriving earlier.


\begin{definition}[MOS]
    Given $v$, a mechanism $\mechanism$ is \textbf{monotone on symmetric players (MOS)} if for any symmetric players $i,j$ and an order $\pi$ where $i\prec_\pi j$
    $$\share{i}{\mechanism}{v}{\pi}\geq \share{j}{\mechanism}{v}{\pi}.$$
\end{definition}

Another key flaw of SF is that the value shares in each possible order can always be far away from Shapley value, without hurting the expected one. Hence, to evaluate these gaps, we introduce the Euclidean distance between the Shapley value and the share in a specific order.

\begin{definition}[SD]
    Given $(v,\pi)$, the \textbf{Shapley distance (SD)} of a mechanism $\mechanism$ on a player $i$ is defined as 
    $$\SD^\mechanism(v,\pi):=\|\SV_i(v) - \share{i}{\mechanism}{v}{\pi}\|^2.$$
\end{definition}


In this paper, our objective is to design mechanisms for 0-1 monotone games that satisfy SF, OIR, I4EA, and MOS, while minimizing expected SD (later we will also show that minimizing SD on every joining order is impossible). We present equivalent conditions for these requirements in the context of 0-1 monotone games, which serve as the foundation for our proposed mechanism.


\subsection{Implementation in 0-1 Games}
\begin{table}[tb]
\centering

\begin{tabular}{|c|c|c|c|c|}
\hline
Order                            & \begin{tabular}[c]{@{}c@{}}Share\\ to 1st\end{tabular} & \begin{tabular}[c]{@{}c@{}}Share\\ to 2nd\end{tabular} & \begin{tabular}[c]{@{}c@{}}Share\\ to 3rd\end{tabular} & \begin{tabular}[c]{@{}c@{}}Share\\ to 4th\end{tabular} \\ \hline
$\square\square\square\triangle$ & $1-2\epsilon$                                          & $\epsilon$                                             & $\epsilon$                                             & $0$                                                    \\ \hline
$\square\square\triangle\square$ & $1-\epsilon$                                           & $0$                                                    & $0$                                                    & $\epsilon$                                             \\ \hline
$\square\triangle\square\square$ & $1$                                                    & $0$                                                    & $0$                                                    & $0$                                                    \\ \hline
$\triangle\square\square\square$ & $0$                                                    & $1$                                                    & $0$                                                    & $0$                                                    \\ \hline
\end{tabular}
\caption{A share of the game in Example~\ref{eg: I4EA not MOC}. Each row represents joining orders where $\square$ corresponds to one of $A,B,C$ and $\triangle$ corresponds to the null player $D$.  The share of the $(k-1)$th player in the order is listed in the $k$th column.}
\label{tb: I4EA not MOC}
\end{table}

For 0-1 monotone games, we first demonstrate that any mechanism satisfying OIR and SF must allocate the value exclusively to the critical players, and this allocation should occur precisely when the marginal player joins.
\begin{theorem}
\label{thm: share among CR}
    For any mechanism $\mechanism$ satisfying OIR and SF on all solvable games, there is
    $\sum_{j\in \CR(v,\pi)}\share{j}{\mechanism}{v}{\pi} = 1$
    for any solvable $v$ and arbitrary $\pi$. Moreover, $\share{j}{\mechanism}{v}{\pi} = \share{j}{\mechanism}{\localv[\previous{\pi}{i}]}{\localpi[\previous{\pi}{i}]}$ for any $j$ where $i$ is the marginal player.
\end{theorem}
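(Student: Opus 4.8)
The plan is to prove the ``moreover'' (freezing) part first, since it is the structural backbone, and then to derive the summation identity from it together with Shapley-fairness; I assume the non-trivial case $v(N)=1$, since otherwise all shares are $0$ and $\CR(v,\pi)=\emptyset$. First I would establish freezing using only OIR. Fix a solvable $v$ and an order $\pi$ with marginal player $i$, and set $S=\previous{\pi}{i}$. Just before $i$ arrives the prefix is $S\setminus\{i\}$ with value $0$, so value conservation forces every share to be $0$ at that instant; when $i$ arrives the total share becomes $v(S)=1$. For every later prefix the total value stays $1$ while OIR keeps each already-present player's share non-decreasing, so the unit already distributed among the players of $S$ cannot move: every player arriving after $i$ receives exactly $0$, and the players of $S$ retain the share they held at $S$. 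This yields $\share{j}{}{v}{\pi}=\share{j}{}{\localv[S]}{\localpi[S]}$ for $j\in S$ and $0$ otherwise, which is the ``moreover'' claim, and already disposes of non-critical players who arrive after $i$.

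By this freezing identity the share of any player depends only on the local game up to the marginal player, i.e.\ on an order in which the marginal player comes last; call such an order a \emph{base configuration}. It therefore suffices to prove the summation identity in a base configuration on a set $M$ (so $v(M)=1$ and $v(M\setminus\{\ell\})=0$ for the last player $\ell$): there I must show that every $j$ with $v(M\setminus\{j\})=1$ receives $0$, after which the critical players necessarily split the whole value $1$. I will also use that sub-games of solvable games are solvable (a witness of unsolvability for $\localv[S]$ lifts verbatim to $v$), so the mechanism, as well as RFC, satisfy OIR and SF on every sub-game. The hard part is exactly these non-critical players \emph{inside} the winning prefix: OIR is vacuous for them, since the value is $0$ right up until $\ell$ arrives and hence only gives $\share{j}{}{v}{\pi}\ge 0$, while SF is merely an averaged constraint; so the argument must route through SF globally.

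The key device is an invariant. For a coalition $S$ let $\beta_j(S)$ be the sum of player $j$'s shares over all base-configuration orders of $\localv[S]$. I would prove by induction on $|S|$ that $\beta_j(S)$ is the same for every mechanism satisfying OIR and SF: expanding the full SF sum $|S|!\,\SV_j(\localv[S])=\sum_{\rho}\share{j}{}{\localv[S]}{\rho}$ and using freezing to rewrite each non-base order's share as a base-configuration share of a strictly smaller sub-game gives $|S|!\,\SV_j(\localv[S])=\beta_j(S)+\sum_{S'\subsetneq S,\,j\in S'}(|S|-|S'|)!\,\beta_j(S')$, which solves for $\beta_j(S)$ in terms of Shapley values and strictly smaller $\beta_j(S')$, all mechanism-independent by the inductive hypothesis. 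Since $\beta_j(S)$ is mechanism-independent, I may evaluate it on RFC, which awards the unit only to a critical player and hence gives every non-critical $j$ a share $0$ in every base configuration; thus $\beta_j(S)=0$ whenever $j$ is non-critical in the base configuration on $S$. As $\beta_j(S)$ is a sum of non-negative shares, each such base-configuration share must itself be $0$, and the freezing identity then propagates this to an arbitrary order $\pi$: $\share{j}{}{v}{\pi}=0$ whenever $j\notin\CR(v,\pi)$, giving $\sum_{j\in\CR(v,\pi)}\share{j}{}{v}{\pi}=1$. I expect the mechanism-independence of $\beta_j(S)$ to be the crux, since it is precisely what converts the averaged SF guarantee into a per-order conclusion.
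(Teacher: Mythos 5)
Your proof is correct, and its second half takes a genuinely different route from the paper's. The opening move is the same: both derive the ``moreover'' (freezing) identity from OIR together with the per-prefix efficiency built into the definition of a value-sharing policy, thereby reducing everything to base configurations in which the marginal player arrives last. The divergence is in how the shares of non-critical players \emph{inside} the winning prefix are killed. The paper peels off the last player of the order and uses SF directly as an accounting identity: the total share a player picks up in the ``new'' orders (those not already determined by smaller local games) must equal her total marginal contribution there, which is zero for a non-critical player, so non-negativity forces each such share to zero. You instead package the base-configuration shares into the aggregate $\beta_j(S)$, show by inverting the SF identity along the subset lattice that $\beta_j(S)$ is identical for \emph{every} OIR-and-SF mechanism, and then evaluate it on RFC, where it is visibly zero for non-critical players. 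The two arguments are close cousins --- both convert the averaged SF constraint into a per-order conclusion via non-negativity and an induction over player sets --- but yours buys a clean, reusable uniqueness lemma and replaces the paper's somewhat informal ``she cannot pay the value back'' step with an explicit computation on a reference mechanism; the price is that you must import RFC's OIR and SF on all solvable games (and hence on all sub-games, which your solvability-of-sub-games remark correctly licenses) as an external input, whereas the paper's accounting is self-contained.
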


As noted in Observation~\ref{ob: CR sym}, the critical players are symmetric when the marginal player joins. Intuitively, earlier critical players should receive higher rewards. In fact, this forms an equivalent condition for MOS.

\begin{proposition}
\label{prop: MOS=MOC}
     For an OIR and SF mechanism $\mechanism$, it satisfies MOS for any 0-1 monotone $v$ and any order $\pi$ iff $\share{i}{\mechanism}{v}{\pi}\geq \share{j}{\mechanism}{v}{\pi},$ $\forall i,j\in\CR(v,\pi)$ satisfying $i\prec_\pi j$.    
\end{proposition}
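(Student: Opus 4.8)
The plan is to prove both directions by using Theorem~\ref{thm: share among CR} to \emph{localize} the shares to the prefix ending at the marginal player, and Observation~\ref{ob: CR sym} to supply the symmetry that MOS needs. Throughout, let $m$ denote the marginal player of $(v,\pi)$ and write $P=\previous{\pi}{m}$. The statement is trivial when $v(N)=0$: by monotonicity every coalition has value $0$, so $\CR(v,\pi)=\emptyset$ and all shares vanish, making both sides of the equivalence vacuously true. So I assume $v(N)=1$, in which case $m$ and $P$ are well defined.

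For the forward direction (MOS implies the critical-player condition), fix $i,j\in\CR(v,\pi)$ with $i\prec_\pi j$. The point to be careful about is that $i$ and $j$ need not be symmetric in the \emph{global} game $v$, so MOS cannot be applied to $(v,\pi)$ directly. By Observation~\ref{ob: CR sym}, however, they are symmetric in the \emph{local} game $\localv[P]$. I would therefore pass to this local game: $\localv[P]$ is again $0$-$1$ monotone and the order restricts to one with $i\prec_{\localpi[P]} j$, so applying MOS to $(\localv[P],\localpi[P])$ gives $\share{i}{\mechanism}{\localv[P]}{\localpi[P]}\geq\share{j}{\mechanism}{\localv[P]}{\localpi[P]}$. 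Finally, the second part of Theorem~\ref{thm: share among CR} identifies each local share with the corresponding global one, $\share{i}{\mechanism}{v}{\pi}=\share{i}{\mechanism}{\localv[P]}{\localpi[P]}$ and likewise for $j$, which transfers the inequality back to $(v,\pi)$.

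For the reverse direction (the critical-player condition implies MOS), fix symmetric players $i,j$ in $v$ with $i\prec_\pi j$; the goal is $\share{i}{\mechanism}{v}{\pi}\geq\share{j}{\mechanism}{v}{\pi}$. I would split on whether $j$ is critical. If $j\notin\CR(v,\pi)$, then Theorem~\ref{thm: share among CR} (shares are non-negative and sum to $1$ over $\CR(v,\pi)$) forces $\share{j}{\mechanism}{v}{\pi}=0\leq\share{i}{\mechanism}{v}{\pi}$. If $j\in\CR(v,\pi)$, then $j\in P$, and since $i\prec_\pi j\preceq_\pi m$ we get $i\in P$ as well; writing $S=P\setminus\{i,j\}\subseteq N\setminus\{i,j\}$, symmetry of $i,j$ yields $v(S\cup\{i\})=v(S\cup\{j\})$, i.e. $v(P\setminus\{j\})=v(P\setminus\{i\})$. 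As $j$ is critical the left-hand side is $0$, so $v(P\setminus\{i\})=0$ and hence $i\in\CR(v,\pi)$ too; both players being critical with $i\prec_\pi j$, the assumed condition delivers the inequality.

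The step I expect to require the most care is the localization in the forward direction. One must first notice that global symmetry of the two critical players can fail, then justify invoking MOS on the local game---which relies on MOS being assumed for \emph{every} $0$-$1$ monotone game (including such restrictions) and on $\localv[P]$ genuinely being one---and finally apply Theorem~\ref{thm: share among CR} to carry the local inequality back to $(v,\pi)$. The reverse direction is mostly bookkeeping, its one genuine point being the short symmetry computation showing that criticality of the later player $j$ propagates to the earlier symmetric player $i$.
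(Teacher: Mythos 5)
Your proof is correct and follows essentially the same route as the paper's: one direction localizes to the prefix ending at the marginal player, uses Observation~\ref{ob: CR sym} for symmetry of critical players there, and transfers shares back via Theorem~\ref{thm: share among CR}; the other uses that symmetric players are either both critical (apply the ordering condition) or the later one is non-critical and hence gets share $0$. Your write-up is merely more explicit than the paper's terse version, spelling out the symmetry computation showing criticality propagates from $j$ to $i$ and the vacuous case $v(N)=0$.
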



Notice that RFC satisfies MOS as only the share of the first critical player is $1$. However, we justify that MOS is not inherently guaranteed by mechanisms that satisfy SF, OIR, and I4EA by Example~\ref{eg: I4EA not MOC}. In the next Section, we offer a family of mechanisms that are OIR, I4EA, SF and MOS. Moreover, the expected SD over all joining orders is minimized when we maximize the minimal share of the critical players.

\begin{definition}[EW]
    Given $(v,\pi)$, the \textbf{egalitarian welfare of critical players (EW)} of a mechanism $\mechanism$ is defined as
    $\EW^{\mechanism}(v,\pi)=\min_{i\in \CR(v,\pi)}\share{i}{\mechanism}{v}{\pi}$.
\end{definition}

\begin{example}
    \label{eg: I4EA not MOC}
    Consider a game containing $N=\{A,B,C,D\}$ where $v(S) = 1$ only if $\{A,B,C\}\subseteq S$ and the share shown in Table~\ref{tb: I4EA not MOC}. To avoid listing all 24 joining orders, we represent $A,B,C$ with $\square$ and $D$ with $\triangle$. This share is obviously not MOS, but it is SF, I4EA and could be output by an OIR mechanism.
\end{example}

\section{The Mechanism for New Fairness}



Following the proposed properties, an intuitive idea is to share the value equally among the critical players. We claim that this idea fails to satisfy I4EA in all solvable games, since the marginal player may delay to decrease the number of critical players, as shown in the following example. 

\begin{example}
\label{eg:ES}
    Consider the game where $N=\{A,B,C,D\}$ and $v(S)=1$ only if $\{A,B,C\}\subseteq S$ or $\{A,B,D\}\subseteq S$. Consider the orders $\pi=[C,A,B,D]$ and $\pi'=[C,A,D,B]$. Notice that $B$ is always the marginal player in both orders. However, the sets of critical players in two orders are different. As listed in Table~\ref{tb: WVS}, $C$ is a critical player in $\pi$ buts not in $\pi'$. Therefore in order $\pi$, $B$ can delay to arrive after $D$ so that the number of critical players decreases and the share of $B$ increases from $1/3$ to $1/2$ if we simply make equal shares (abbreviated as ES in the table) among the critical players.
\end{example}

\begin{table}[tb]
\centering
 \renewcommand{\arraystretch}{1.5}
\begin{tabular}{c|ccccccc|}
\cline{2-8}
\multicolumn{1}{l|}{\multirow{2}{*}{}} & \multicolumn{7}{c|}{Joining Orders}                                                                  \\ \cline{2-8} 
\multicolumn{1}{l|}{}                  & \multicolumn{3}{c|}{C-A-{\color{red}B}-D}         & \multicolumn{2}{c|}{C-A-D-{\color{red}B}}   & \multicolumn{2}{c|}{C-D-A-B} \\ \hline
\multicolumn{1}{|c|}{CR}               & C   & A   & \multicolumn{1}{c|}{{\color{red}B}}   & A   & \multicolumn{1}{c|}{{\color{red}B}}   & A             & B            \\ \cline{2-8} 
\multicolumn{1}{|c|}{RFC}              & 1   & 0   & \multicolumn{1}{c|}{0}   & 1   & \multicolumn{1}{c|}{0}   & 1             & 0            \\ \cline{2-8} 
\multicolumn{1}{|c|}{ES}               & 1/3 & 1/3 & \multicolumn{1}{c|}{\color{red}1/3} & 1/2 & \multicolumn{1}{c|}{{\color{red}1/2}} & 1/2           & 1/2          \\ \hline
\multicolumn{1}{|c|}{MCP}              & \multicolumn{3}{c|}{C-A-B}           & \multicolumn{2}{c|}{C-A-B}     & \multicolumn{2}{c|}{C-D-A-B} \\ \cline{2-8} 
\multicolumn{1}{|c|}{EVS}              & 1/3 & 1/3 & \multicolumn{1}{c|}{\color{blue}1/3} & 2/3 & \multicolumn{1}{c|}{\color{blue}1/3} & 1/2           & 1/2          \\ \hline
\multicolumn{1}{|c|}{SV}               & \multicolumn{7}{c|}{A:5/12; B: 5/12; C:1/12; D: 1/12}                                                \\ \hline
\end{tabular}
\caption{Three joining orders of the game in Example~\ref{eg:ES} and~\ref{eg:WVS}. The critical players (CR), result of equally sharing among critical players (ES), minimal critical prefix (MCP), result of EVS (WVS with $w(k)=1$) and the result of RFC are listed respectively. We only show the share of critical players following their joining order. Moreover, we also present the Shapley value (SV) of the game in the last row. }
\label{tb: WVS}
\end{table}

To fix this problem, we propose the idea of determining the \emph{minimal critical prefix} in $(v,\pi)$: for the marginal player $i$, we move $i$ one position forward until the player before $i$ is a critical player, and determine the share of $i$ in that order. 

\begin{definition}
    Given $(v,\pi)$, the \emph{minimal critical prefix} is
    $$\mcp(v,\pi) :=\left(  \arg\min_{\mathclap{\substack{S\sqsubseteq\pi \\ v(S\cup\{i\})=1}}} |S|\right) \cup \{i\}$$
    where $i$ is the marginal player in $(v,\pi)$.
\end{definition}
Notice that the process of determining the minial critical prefix is not equivalent to simply take $i$ together with the agents (weakly) before the penultimate critical player. As we have shown in Observation~$\ref{ob: CR decrease or increase}$, if $i$ is moved forward, the critical players may increase. For instance, we examine the game and two orders in Example~\ref{eg:ES}. In order $\pi$, $B$ cannot maintain being the marginal player by moving her to earlier positions, so that the minimal critical prefix just includes all players (weakly) before $B$. In contrast, in order $\pi'$, $B$ is still the marginal player if moved one step earlier, which finally induces the same minimal critical prefix as that of $\pi$. Intuitively, we should give $B$ the same share in both orders because of the same minimal critical prefix, or $B$ may delay to increase her value share.

Based on the above idea, we now propose our main result, the \emph{Weighted Value-Sharing Mechanism (WVS)}, as summarized in Mechanism~\ref{alg:WVS}. We show that, taking use of the minimal critical prefix, we can improve not only the idea of equally sharing among critical players, but also any policy determined by a non-increasing weight function $w:\mathbb{N}_+ \rightarrow \mathbb{R}_{\geq 0}$, to satisfy I4EA. When $w$ is a constant function, we call it a \emph{Egalitarian Value-Sharing Mechanism (EVS)}. Moreover, in the next Section, we claim that EVS optimizes EW and expected SD. Intuitively, different weight functions impose different importance of players who arrive earlier: if the weight function decreases sharply, we finally give more reward for those joining earlier. In the extreme case where $w(1)\neq 0$ and $w(k) = 0$ for any $k>1$, the corresponding WVS degenerates to the RFC mechanism.

A WVS is parameterized by a weight function $w$, and the idea of WVS is firstly to determine the share of the marginal players in the minimal critical prefix by weighted average, and then to distribute the rest value to other critical players proportional to their weights. When $v$ is unsolvable, this procedure would produce an inefficient share, where the share of the marginal player is less than $1$ and the shares of others are $0$. To maintain efficiency, we give all the value to the marginal player in such cases.


\begin{algorithm}[htb]
\floatname{algorithm}{Mechanism}
    \caption{Weighted Value-Sharing Mechanism (WVS)}
    \label{alg:WVS}
    \renewcommand{\algorithmicrequire}{\textbf{Input:}}
    \renewcommand{\algorithmicensure}{\textbf{Output:}}
    
    \begin{algorithmic}[1] 
    \REQUIRE $v,\pi$ and a weight function $w$
    \ENSURE value share $\phi\leftarrow\{\phi_j\}_{j\in \pi}$
    \STATE Initialize $\phi_j\leftarrow 0, \forall j\in \pi$.
    \STATE Let $i^*$ be the marginal player, and let $m$ be the number of critical players in $\pi$.
    \STATE Let $S\leftarrow $ be the minimal critical prefix $\mcp(v,\pi)$, and let $m'$ be the number of critical players in $S$.
    \IF{$m'>m=1$}
        \STATE WARNING: ``$v$ is not solvable''.
        \STATE $\phi_i\leftarrow1$ and RETURN.
    \ENDIF
    \STATE Set $\phi_{i^*} \leftarrow \frac{w(m')}{\sum^{m'}_{k= 1}w(k)}$.
    \STATE For each critical player $j$ in $\CR(v,\pi)\setminus\{i\}$, assume $j$ is the $t$-th arrived player in $\CR(v,\pi)\setminus\{i\}$, set $$\phi_j\leftarrow \frac{w(t)}{\sum^{m-1}_{k=1}w(k)}(1-\phi_{i^*}).$$    
    \end{algorithmic}
\end{algorithm}


\begin{example}
\label{eg:WVS}
    Consider the same game as described in Example~\ref{eg:ES}, and here we show the results of WVS with a constant weight function, i.e., the EVS mechanism, which are also listed in Table~\ref{tb: WVS}. As we mentioned before, the minimal critical prefix of $[C,A,D,B]$ is $[C,A,B]$ which is the same as the minimal critical prefix of $[C,A,B,D]$. Therefore, the shares of $B$ in both orders are $1/3$ if applying the EVS. For the remaining critical players, they will equally share the rest value since they have equal weights. Now, we can see that the mechanism avoid the strategic delay of player $B$. Moreover, in each order, compared the share of that in RFC, where player $B$ gets nothing, the value distribution is fairer since it is closer to Shapley value.
\end{example}

\subsection{Properties}

We now present the theoretical properties of WVS. In Theorem~\ref{thm: WVS complete}, we show that WVS satisfies SF, OIR, I4EA and MOS if the weight function is weakly monotone decreasing.

\begin{theorem}
\label{thm: WVS complete}
    For any solvable $v$, WVS defined by $w$ is SF, OIR, I4EA and MOS if $w$ is weakly monotone decreasing.
\end{theorem}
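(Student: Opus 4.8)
The plan is to verify the four properties in the order OIR, SF, MOS, I4EA, using Theorem~\ref{thm: share among CR} to localize every computation to the moment the marginal player arrives. Throughout write $i^*$ for the marginal player of $(v,\pi)$, $m=|\CR(v,\pi)|$, $m'$ for the number of critical players in the minimal critical prefix $\mcp(v,\pi)$, and $\alpha=\frac{w(m')}{\sum_{k=1}^{m'}w(k)}$ for the share WVS assigns to $i^*$. OIR is immediate: before any winning prefix appears there is no marginal player, so WVS outputs $0$ for everyone; once $i^*$ has arrived, the marginal player, the critical set, and $\mcp$ all depend only on $\previous{\pi}{i^*}$ and the order within it, none of which changes as later players join. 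Hence every share jumps from $0$ to its final value exactly when $i^*$ arrives and then stays constant, which is monotone.

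For SF I would use a symmetrization argument that also renders the weighting irrelevant in expectation. Partition the $|N|!$ orders into orbits, where two orders lie in the same orbit iff they differ only by a permutation of the critical players among the (fixed) positions those players occupy. By Observation~\ref{ob: CR sym} the critical players are symmetric in $\localv[\previous{\pi}{i^*}]$, so permuting them preserves the value of every prefix; consequently the marginal position, the critical set, $m$, $m'$ and $\alpha$ are constant on each orbit, the orbits genuinely partition all orders, and a player that is (non-)critical in one order of an orbit is (non-)critical in all. Fix an orbit with critical count $m$ and a critical player $i$. Over the $m!$ orders of the orbit, $i$ is the marginal (last) critical player in a $1/m$ fraction and occupies each non-marginal rank $t\in\{1,\dots,m-1\}$ equally often; summing the WVS shares gives expected share $\frac{1}{m}\bigl[\alpha+(1-\alpha)\sum_{t=1}^{m-1}\tfrac{w(t)}{\sum_{k=1}^{m-1}w(k)}\bigr]=\frac1m$, independent of $w$ and $\alpha$. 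RFC likewise awards each critical player exactly $1/m$ in expectation over the same orbit (each is the earliest critical player in a $1/m$ fraction), and both mechanisms give $0$ to non-critical players. Since RFC is SF on solvable games, this orbit-wise equality of expected shares upgrades to $\expectation{\share{i}{}{v}{\pi}}=\SV_i(v)$, so WVS is SF.

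Granting OIR and SF, Proposition~\ref{prop: MOS=MOC} reduces MOS to showing that WVS is non-increasing along the critical players in arrival order. Among the $m-1$ non-marginal critical players the share is proportional to $w(t)$, hence weakly decreasing in the rank $t$ because $w$ is weakly decreasing. The only remaining comparison is the penultimate critical player against $i^*$, i.e.\ $\frac{w(m-1)}{\sum_{k=1}^{m-1}w(k)}(1-\alpha)\ge\alpha$. Writing $1-\alpha=\frac{\sum_{k=1}^{m'-1}w(k)}{\sum_{k=1}^{m'}w(k)}$, this is equivalent to $\frac{w(m-1)}{\sum_{k=1}^{m-1}w(k)}\sum_{k=1}^{m'-1}w(k)\ge w(m')$; since $m'\ge m$ by Observation~\ref{ob: CR decrease or increase} (moving $i^*$ earlier only adds critical players) the left side is at least $w(m-1)\ge w(m')$, so the inequality holds.

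The heart of the proof, and the step I expect to be hardest, is I4EA. First I reduce to a single backward adjacent transposition: any admissible delay fixing $\pi_{|N\setminus\{i\}}$ factors into swaps of $i$ with its immediate successor $j$, so it suffices to show each such swap does not increase $\share{i}{}{v}{\pi}$. If $i$ arrives after $i^*$, or is non-critical, it gets $0$ before and after. If $i=i^*$ delays, either $j$ becomes the new marginal and $i$ drops to $0$, or $i^*$ stays marginal and its share is unchanged because $\mcp$ depends only on the fixed order of the other players---this invariance of the minimal critical prefix is exactly what the construction buys and what defeats the strategic delay of Example~\ref{eg:ES}. The subtle case is when $i$ is critical and sits immediately before $i^*$ and the swap makes $i$ the new marginal: here $\previous{\pi}{i^*}=\previous{\pi'}{i}$ as sets, so $\CR$ and $m$ are unchanged and $i,i^*$ merely exchange the last two critical ranks; moreover $i$ and $i^*$ are symmetric in $\localv[\previous{\pi}{i^*}]$, which forces both minimal critical prefixes to have the same critical count $m^\dagger$, and the required $\frac{w(m-1)}{\sum_{k=1}^{m-1}w(k)}(1-\alpha)\ge\alpha$ again follows from $m^\dagger\ge m$ and $w$ decreasing. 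The last family of swaps---both players strictly before $i^*$---leaves the critical set fixed (Observation~\ref{ob: CR unchanged}) and can only push $i$ to a later critical rank; the one point genuinely needing care is that such a pre-marginal swap does not decrease $m'$, which would raise $1-\alpha$ and hence $i$'s share. I would settle this by proving that the critical count of the minimal critical prefix is invariant under reordering the players before $i^*$, exploiting that any players appearing when the minimal winning prefix lengthens are themselves non-critical. Establishing this invariance cleanly is the main obstacle; the remaining bookkeeping is routine given Observations~\ref{ob: CR unchanged}--\ref{ob: CR sym}.
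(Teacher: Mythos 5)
Your OIR, SF, and MOS arguments are sound and essentially parallel the paper's: the orbit computation for SF is the paper's partition $\mathcal{P}(i,\Pi)$ in different clothing (the paper compares the orbit sums directly to marginal contributions rather than routing through RFC, but the content is the same), and your explicit verification that $\frac{w(m-1)}{\sum_{k=1}^{m-1}w(k)}(1-\alpha)\ge\alpha$ follows from $m'\ge m$ and $w$ decreasing is in fact more careful than the paper's one-line MOS argument; your symmetry treatment of the penultimate-critical-versus-marginal swap is likewise fine. The genuine gap is exactly the step you flag as unresolved in I4EA: the lemma you propose --- that the critical count $m'$ of the minimal critical prefix is invariant under reordering the players before the marginal player --- is false. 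Take $N=\{i^*,i,j,k\}$ with $v(S)=1$ iff $\{i,i^*,k\}\subseteq S$ or $\{i,i^*,j\}\subseteq S$ (a solvable game, since every winning coalition has at least two critical players). For the order $[k,i,j,i^*]$ the minimal critical prefix is $\{k,i,i^*\}$ and all three of its members are critical, so $m'=3$; after the pre-marginal swap giving $[k,j,i,i^*]$ the minimal critical prefix is the whole prefix $\{k,j,i,i^*\}$, in which only $i$ and $i^*$ are critical, so $m'=2$. You also have the sign of the danger reversed: a \emph{decrease} in $m'$ raises $\alpha$ and \emph{shrinks} the pot $1-\alpha$; what would threaten I4EA is $m'$ \emph{increasing} when the critical player delays.

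What actually closes the case, and what the paper proves, is a one-sided monotonicity rather than invariance. Since $i$ is critical, every prefix $S$ with $v(S\cup\{i^*\})=1$ must contain $i$ (otherwise $S\cup\{i^*\}\subseteq \previous{\pi}{i^*}\setminus\{i\}$, which has value $0$); hence moving $i$ one position earlier can only shrink the minimal feasible prefix, i.e., $\mcp(v,\pi')\subseteq\mcp(v,\pi)$ where $\pi'$ is the order with $i$ earlier. By Observation~\ref{ob: CR decrease or increase}, the critical set of the smaller prefix is a superset of that of the larger one, so $m'$ weakly increases when $i$ arrives earlier and weakly decreases when $i$ delays --- the favorable direction, as the example above illustrates ($m'=3$ earlier versus $m'=2$ later). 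Thus the pot $1-\alpha$ is weakly larger in the earlier order, which combines with $i$'s weakly larger weight fraction there to give I4EA. Replacing your invariance claim with this inclusion argument completes the proof.
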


Also note that not all mechanisms satisfying these properties can be written as a WVS defined by some specific $w(k)$, because there might exist mechanisms that output different value share for orders with same numbers of critical players. Such mechanisms require a weight system defined on different joining orders, which raises the complexity of mechanism definition and computation to exponential level.  

Now we move to the question that how to optimize the EW and SD. First, we naturally constrain the space of all possible mechanisms into \emph{anonymous} ones. That is, if we replace a player $i$ with an equivalent player $i'$ in $(v,\pi)$ to create a game $(v',\pi')$, then $\share{i}{}{v}{\pi} = \share{i'}{}{v'}{\pi'}$. 
Surprisingly, we find that the EVS, i.e, a WVS with the constant weight function optimizes EW and the expected SD. 

\begin{definition}
    The Egalitarian Value-Sharing Mechanism is an instance of the WVS mechanism, where the weight system is given by $w(k)=c$ for some constant $c>0$.
\end{definition}

\begin{theorem}
\label{thm: EGM max EW}
    Given any solvable $v$ and joining order $\pi$, $$\EW^{\EGM}(v,\pi)\geq \EW^{\mechanism}(v,\pi),$$ where $\mechanism$ is an arbitrary mechanism satisfying SF, OIR, I4EA and MOS.
\end{theorem}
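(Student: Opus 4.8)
The plan is to evaluate both sides against a single scalar pinned down by the minimal critical prefix. Write $i^*$ for the marginal player of $(v,\pi)$, let $m=|\CR(v,\pi)|$, and let $m'$ be the number of critical players contained in $\mcp(v,\pi)$. By Observation~\ref{ob: CR decrease or increase}, pulling $i^*$ earlier can only enlarge the set of critical players, so $m'\geq m$. I will show that $\EW^{\EGM}(v,\pi)=\tfrac1{m'}$ and that $\EW^{\mechanism}(v,\pi)\leq\tfrac1{m'}$ for every $\mechanism$ satisfying SF, OIR, I4EA and MOS, which together give the claim.

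First I would read the EVS allocation straight off Mechanism~\ref{alg:WVS} with constant $w$: the marginal player gets $\tfrac1{m'}$, while each of the remaining $m-1$ critical players gets $\tfrac1{m-1}\bigl(1-\tfrac1{m'}\bigr)$. A one-line inequality shows $\tfrac1{m'}\leq\tfrac1{m-1}\bigl(1-\tfrac1{m'}\bigr)$ exactly when $m\leq m'$, which holds; hence the marginal player attains the minimum and $\EW^{\EGM}(v,\pi)=\tfrac1{m'}$.

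Next I would establish the matching upper bound. Let $\pi'$ be the order that moves $i^*$ forward to sit immediately after the smallest prefix $S$ with $v(S\cup\{i^*\})=1$, leaving the other players in their $\pi$-order. In $\pi'$ the marginal player is still $i^*$ and there are exactly $m'$ critical players; by Observation~\ref{ob: CR sym} these are symmetric at the instant $i^*$ arrives, and by Theorem~\ref{thm: share among CR} their shares sum to $1$. Since $i^*$ is the last of these critical players and Proposition~\ref{prop: MOS=MOC} forces the shares to be weakly decreasing along the arrival order, $i^*$ receives at most the average: $\share{i^*}{\mechanism}{v}{\pi'}\leq\tfrac1{m'}$. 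Because $i^*$ has strictly fewer predecessors in $\pi'$ than in $\pi$ while all others keep their relative order, I4EA yields $\share{i^*}{\mechanism}{v}{\pi}\leq\share{i^*}{\mechanism}{v}{\pi'}\leq\tfrac1{m'}$. As $i^*$ is itself a critical player of $\pi$, the egalitarian welfare---a minimum over critical players---is at most this, so $\EW^{\mechanism}(v,\pi)\leq\tfrac1{m'}$, closing the argument.

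The main obstacle will be the careful handling of the minimal-critical-prefix order: I must confirm that $v(S)=0$ (which follows from monotonicity together with $v(\previous{\pi}{i^*}\setminus\{i^*\})=0$), so that $i^*$ genuinely remains the marginal player in $\pi'$ and I4EA is applicable with $\previous{\pi'}{i^*}\subseteq\previous{\pi}{i^*}$. A small edge case is $m=1$, where there are no non-marginal critical players; for solvable $v$ this forces $m'=1$ as well---otherwise $i^*$ could delay to become the unique critical player, contradicting solvability (Definition~\ref{def: solvable})---so both quantities equal $1$ and the inequality holds trivially.
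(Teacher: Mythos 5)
Your proof is correct and rests on the same two pillars as the paper's: I4EA lets you compare the marginal player's share in $\pi$ to its share in the order where it sits at the end of the minimal critical prefix, and there MOS plus the symmetry of critical players (via Theorem~\ref{thm: share among CR} and Proposition~\ref{prop: MOS=MOC}) caps that share at the average $\tfrac{1}{m'}$. The only difference is presentational --- you jump directly to the minimal-critical-prefix order and pin down the explicit value $\tfrac{1}{m'}$ (including the check that the marginal player is the minimizer under EVS and the $m=1$/solvability edge case), whereas the paper reaches the same bound by iterating adjacent swaps; your version is, if anything, the more explicit of the two.
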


\begin{theorem}
\label{thm: EGM min d to SV}
     Given any solvable $v$, there is
     $$\expectation[\pi\sim\mathcal{U}(\Pi(N))]{\SD^{\EGM}(v,\pi)}\leq \expectation[\pi\sim\mathcal{U}(\Pi(N))]{\SD^\mechanism(v,\pi)}$$
      where $\mechanism$ is an arbitrary anonymous mechanism satisfying SF, OIR, I4EA and MOS.
\end{theorem}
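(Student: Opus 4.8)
The plan is to prove the statement by reducing the expected Shapley distance to a per-player, per-critical-prefix optimization problem and then showing that the constant weight function solves each such subproblem optimally. First I would exploit Theorem~\ref{thm: share among CR}: since any OIR and SF mechanism allocates the entire unit value among the critical players exactly when the marginal player joins, the share $\share{i}{\mechanism}{v}{\pi}$ depends only on the local game at the minimal critical prefix. This lets me group the permutations $\pi \in \Pi(N)$ by the combinatorial data that actually determines the shares—namely, for each contributing player $i$, the set of orders in which $i$ is a critical player together with $i$'s rank among the critical players in $\hat{S}(v,\pi)$. Because $\mechanism$ is anonymous and symmetric critical players are interchangeable (Observation~\ref{ob: CR sym}), the expected squared distance $\expectation[\pi]{\SD^\mechanism(v,\pi)}$ decomposes into a sum over these equivalence classes of the form $\sum_i \sum_\pi (\SV_i(v) - \share{i}{\mechanism}{v}{\pi})^2$, weighted by how many permutations realize each class.

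Next I would isolate the degrees of freedom left to $\mechanism$. SF pins down the average share of each player, and MOS (via Proposition~\ref{prop: MOS=MOC}) forces the shares among critical players to be monotone in arrival order; OIR and I4EA constrain how the shares may vary across orders with the same minimal critical prefix. Within these constraints, the only freedom is in how the unit value is split among the $m$ critical players in each order. The key reduction is that minimizing a sum of squared deviations $\sum (\SV_i - \phi_i)^2$ subject to a fixed total (the SF constraint, which fixes the sum of the $\phi_i$ across orders) is minimized by making the allocation as \emph{even} as possible—this is precisely the statement that equalizing shares minimizes $\ell_2$ distance to a fixed centroid. I would make this rigorous by a variational/exchange argument: given any admissible $\mechanism$ whose shares among critical players in some order are unequal in a way not forced by the monotonicity and I4EA constraints, transferring share from the larger to the smaller strictly decreases the squared distance while preserving SF, so the optimum must equalize shares to the greatest extent the constraints permit. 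Since EVS uses a constant weight and therefore splits the residual value equally among the non-marginal critical players (and assigns the marginal player the reciprocal-of-count share via the minimal critical prefix), it is exactly the mechanism that achieves maximal equalization; this dovetails with Theorem~\ref{thm: EGM max EW}, which already establishes that EVS maximizes the minimum share.

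The main obstacle I anticipate is handling the interaction between the SF constraint and the I4EA/minimal-critical-prefix structure across different orders simultaneously. The difficulty is that the shares in different orders are not independent: SF couples them through the averaging constraint $\frac{1}{|N|!}\sum_\pi \share{i}{\mechanism}{v}{\pi} = \SV_i(v)$, and I4EA couples orders that share a minimal critical prefix by forcing equal shares for the marginal player across them. So I cannot simply minimize each order's contribution in isolation; I must argue that EVS simultaneously attains the per-order egalitarian optimum \emph{and} respects the global SF budget. The cleanest route is to show that the egalitarian (max-min) allocation guaranteed by Theorem~\ref{thm: EGM max EW} coincides, order by order, with the Euclidean projection of $\SV$ onto the feasible set of shares—using the fact that in a 0-1 game all contributing players are symmetric at the critical moment, so their Shapley values enter the distance symmetrically and the $\ell_2$-closest feasible point is the most balanced one. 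I would then confirm that EVS's particular split (determined by the minimal critical prefix) is consistent with the SF average, so that no redistribution across orders can improve the aggregate without violating feasibility.

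\begin{proof}[Proof sketch]
Omitted; see the plan above. A full argument proceeds by decomposing $\expectation[\pi]{\SD^\mechanism(v,\pi)}$ over equivalence classes of orders sharing a minimal critical prefix, invoking Theorem~\ref{thm: share among CR} to localize each share, and applying an $\ell_2$-equalization (exchange) argument constrained by SF, OIR, I4EA, and MOS to show that the constant-weight EVS attains the minimum.
\end{proof}
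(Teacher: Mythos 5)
Your plan identifies the right objective (reduce to an $\ell_2$-equalization problem) and correctly names the central obstacle---that SF couples the shares across orders so that no single order can be optimized in isolation---but it does not supply the idea that actually resolves that coupling, and this is where the paper's proof does its real work. The paper reuses the partition $\mathcal{P}(i,\Pi)=\{\Pi_{i0},\dots,\Pi_{im}\}$ from the SF part of the proof of Theorem~\ref{thm: WVS complete}: each class $\Pi_{ik}$ ($k\geq 1$) consists of one order $\pi$ in which $i$ is the marginal player together with the orders obtained by swapping $i$ with each earlier critical player of $\pi$. Because the critical players are symmetric in the local game (Observation~\ref{ob: CR sym}) and the mechanism is anonymous, the share $i$ receives in the swapped order equals the share that critical player receives in $\pi$; summing over the class and invoking efficiency of the split among $\CR(v,\pi)$ gives $\sum_{\pi\in\Pi_{ik}}\share{i}{\mechanism}{v}{\pi}=1$ for \emph{every} admissible mechanism, not just EVS. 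This converts the global SF averaging constraint into a fixed per-class budget for a single player's shares, after which minimizing $\sum_{\pi\in\Pi_{ik}}(\share{i}{\mechanism}{v}{\pi}-\SV_i(v))^2$ under a fixed sum is a genuinely local problem, solved by making the class as equal as possible; MOS plus Theorem~\ref{thm: EGM max EW} then pin down EVS as the maximally equal feasible point. Your grouping by minimal critical prefix does not have this budget-invariance property, so your "transfer share from larger to smaller while preserving SF" exchange step is not justified: transferring share within one order changes the expected shares of two \emph{different} players, and you would need to show that the compensating adjustments in other orders remain feasible under OIR, I4EA, and MOS---precisely the interaction you flag but leave unresolved.

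Your appeal to "the Euclidean projection of $\SV$ onto the feasible set" is also not a substitute for this step, because the feasible set is defined jointly over all orders and all players; without the per-class budget identity you have no warrant that the projection decomposes order by order, or that the per-order egalitarian point is simultaneously attainable with the global SF constraint. In short, the skeleton is compatible with the paper's argument, but the load-bearing lemma---the anonymity-based swap argument showing $\sum_{\pi\in\Pi_{ik}}\share{i}{\mechanism}{v}{\pi}=1$ for every admissible $\mechanism$---is missing, and without it the proof does not close.
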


We further argue that no mechanism can optimize SD for every single $(v,\pi)$. Although critical players may initially be symmetric when the marginal player joins, this symmetry can be disrupted as additional players arrive. For example, in Example~\ref{eg:WVS}, players $A$, $B$, and $C$ are symmetric in the local order $[A,B,C]$, but this symmetry no longer holds in the order $[A,B,C,D]$. Due to such asymmetry, a mechanism that optimizes SD would ideally allocate shares to critical players proportionally to their Shapley values. However, the Shapley value itself may change as new players join, while any value allocation becomes irrevocable once it is made.
\section{The Proofs}
\label{sec: proofs}

Now we offer formal proofs of the theorems and propositions in this paper. For simplicity, we denote $\pi=[...,i,...,j,...]$ and $\pi'=[...,j,...,i,...]$ as a pair of joining orders where only the positions of $i$ and $j$ are exchanged and the order of others is fixed. When we refer to $i,j$ as adjacent players, we use $\pi=[...i,j...]$ (no ellipsis between $i,j$).

\subsection{Sharing among Critical Players}

\begin{proof}[Proof of Theorem~\ref{thm: share among CR}]

For a SF mechanism, the value should be shared to the players entirely at any time. On 0-1 monotone games, the value becomes $1$ when the marginal player joins and keeps unchanged. Therefore, the share should be determined at that time and due to OIR, it should not be changed. As for sharing only on critical players, we prove by induction: given solvable $v$, the equality holds if for any $i\in N$ and any $\localpi[N\setminus\{i\}]\in \Pi(N\setminus\{i\})$, only players in $\CR(\localv[N\setminus\{i\}],\localpi[N\setminus\{i\}])$ receive the value share in the local game $(\localv[N\setminus\{i\}],\localpi[N\setminus\{i\}])$. For simplicity, let $S_i = N\setminus\{i\}$. 

For $\pi \in \Pi(N)$, let $i$ be the last player in $\pi$. We discuss by categories:
\begin{itemize}
    \item If $i$ is not the marginal player, then $\phi_j(v,\pi) = \phi_j(\localv[S_i],\localpi[S_i])$ following OIR, and only critical players have a positive share as $\CR(v,\pi) = \CR(\localv[S_i],\localpi[S_i])$. 
    \item If $i$ is the marginal player, then $i$ is also a critical player whenever $i$ is not the last to join. Moreover, following SF, the value share should satisfy
    \begin{align*}
        &\sum_{\pi\in \Pi(N)}\share{i}{}{v}{\pi} - \sum_{j\in S_i}\sum_{\localpi[S_j]\in\Pi[S_j]}\share{i}{}{\localv[S_j]}{\localpi[S_j]} \\
        =&\sum_{\pi\in \Pi(N)}\MC_i(v,\pi) - \sum_{j\in S_i}\sum_{\localpi[S_j]\in\Pi[S_j]}\MC_i(\localv[S_j],\localpi[S_j]) \\
        =&\sum_{\localpi[S_i]\in \Pi(S_i)}\MC_i(v,\pi).
    \end{align*}
    i.e., over all joining orders, the total value share increase of $i$  equals her total marginal contribution. If any player $j\notin \CR(v,\pi)$ gets a positive share, she can not pay the value back to those critical players as no marginal contribution is created when she is the last to join. Therefore, the value should be shared only among critical players.
\end{itemize}
\end{proof}

\begin{proof}[Proof of Proposition~\ref{prop: MOS=MOC}]
    
    ``$\Rightarrow$'': Given 0-1 monotone $v$ and an order $\pi$, notice that if $i,j$ are symmetric, then $i\in\CR(v,\pi)\Leftrightarrow j\in\CR(v,\pi)$. Suppose $i\prec_\pi j$, if both $i,j$ are critical players, MOS is guaranteed when the share of $i$ is higher than $j$; if $i,j$ are not critical players, their share should be $0$ following SF and OIR.
    
    ``$\Leftarrow$'': If the share of $i$ is less than $j$, then the local game on $\previous{\pi}{i^*}$ is a counter example for MOS.
\end{proof}

\subsection{WVS is SF, OIR, I4EA and MOS}

Before the proof of Theorem~\ref{thm: WVS complete}, we emphasize that WVS can verify if a game is solvable properly. WVS firstly determines the share of the marginal player, and shares the rest of the value among other critical players. The share is inefficient only when (1) the share of the marginal player is less than $1$ and (2) the marginal player is the unique critical player. Therefore, there exists a joining order where the marginal player is not the unique critical player, and she can delay to be the unique one. Such cases are unsolvable following Definition~\ref{def: solvable}. Now, we give the proof of Theorem~\ref{thm: WVS complete} formally.

\begin{proof}[Proof of Theorem~\ref{thm: WVS complete}]

\textbf{OIR}. WVS is obviously OIR as the value is shared only once when the marginal player joins, and the share is unchanged in the future.

\textbf{SF}.  Given $N$ and $i\in N$, we construct a partition $\mathcal{P}(i,\Pi)=\{\Pi_{i0},\Pi_{i1},...,\Pi_{im}\}$ of the joining order set $\Pi(N)$ as follows:
    \begin{itemize}
        \item for any $\pi$ where $i\notin\CR(v,\pi)$, $\pi\in\Pi_{i0}$;
        \item for every $1\leq k \leq m$, there exists only one $\pi\in \Pi_{ik}$, where $i$ is the marginal player of $\pi$;
        \item for $\Pi_{ik}$ where $1\leq k \leq m$, if $\pi\in \Pi_{ik}$, then $\pi'\in \Pi_{ik}$. Here $\pi'$ is the joining order where $i$ exchanges her position with each one previous critical player in $\pi$.
    \end{itemize}
    Notice that for any $\pi$ such that $i\in \CR(v,\pi)$ while $i$ is not the marginal player, it corresponds to another $\pi'$ where $i$ exchanges the position with the marginal player in $\pi$ to become the marginal player in $\pi'$ without changing the critical player set. Hence, the above partition finally includes all $\pi\in\Pi(N)$. As we mentioned, the critical players are symmetric with each other in the local game $(\localv[\previous{\pi}{i}],\previous{\pi}{i})$. Therefore, in an anonymous mechanism for $\pi=[...,j,...,i,...]$ and $\pi'=[...,i,...,j,...]$ satisfying $\{i,j\}\subseteq \CR(v,\pi) = \CR(v,\pi')$, we have $\share{i}{\mechanism}{v}{\pi}=\share{j}{\mechanism}{v}{\pi'}$. Notice that $\sum_{j\in \CR(v,\pi)}\share{i}{\WVS}{v}{\pi}=1$ according to the efficiency, we have 
    $\sum_{\pi\in \Pi_{ik}}\share{i}{\WVS}{v}{\pi}=1.$
    Now consider the total marginal contribution produced by $i$ in the joining orders in $\Pi_{ik}$. For $1\leq k \leq m$, as there is only one order that $i$ is the marginal player, we have 
    $\sum_{\pi\in\Pi_{ik}}\MC_i(v,\pi)=1$. For $k=0$, $i$ is never the marginal player so the total marginal contribution is $0$. Hence, 
    $ \sum_{\pi\in \Pi_{ik}}\share{i}{\WVS}{v}{\pi} = \sum_{\pi\in\Pi_{ik}}\MC_i(v,\pi) $
    Overall,
    \begin{align*}
        &\frac{1}{n!}\sum_{\Pi_{ik}\in \mathcal{P}(i,\Pi)}\sum_{\pi\in \Pi_{ik}}\share{i}{\WVS}{v}{\pi}\\
        =&\frac{1}{n!}\sum_{\Pi_{ik}\in \mathcal{P}(i,\Pi)}\sum_{\pi\in\Pi_{ik}}\MC_i(v,\pi)\\
        =& \frac{1}{n!}\sum_{\pi\in \Pi(N)} \MC_i(v,\pi) =\SV_i(v).
    \end{align*}    
\textbf{I4EA}. For any solvable game $v$, and $\pi=[...,j,i...]$,$\pi'=[...i,j...]$, we discuss the share of $i$ by categories:
\begin{itemize}
    \item If $i\notin \CR(v,\pi)$, then we can infer $i\notin \CR(v,\pi')$, so $\share{i}{\WVS}{v}{\pi'}=\share{i}{\WVS}{v}{\pi}=0$. 
    \item If $i\in \CR(v,\pi)$ but not the marginal player, we first show that $\CR(\localv[\mcp(v,\pi)],\localpi[\mcp(v,\pi)]) \subseteq \CR(\localv[\mcp(v,\pi')],\localpi[\mcp(v,\pi')])$. We denote the marginal player of $\pi$ as $i^*$, then $i^*$ is also the marginal player of $\pi'$ following Observation~\ref{ob: CR unchanged}. Notice that, if $i^*$ is moved one position forward simultaneously in $\pi$ and $\pi'$, we still have $\previous{\pi}{i} = \previous{\pi'}{i}$, and the difference only happens when $i^*$ is the next player of $i$ in $\pi$. As $i\in\CR(v,\pi)$, we have $i\in \mcp(v,\pi)$, which means $i^*$ would not be moved before $i$ when finding $\mcp(v,\pi)$. In $\pi'$, $i$ is also included in $\mcp(v,\pi)$, but $j$ may be excluded. Therefore, $\mcp(v,\pi')\subseteq \mcp(v,\pi)$, so $\CR(\localv[\mcp(v,\pi)],\localpi[\mcp(v,\pi)]) \subseteq \CR(\localv[\mcp(v,\pi')],\localpi[\mcp(v,\pi')])$ following Observation~\ref{ob: CR decrease or increase}. This suffices to show $\share{i^*}{\WVS}{v}{\pi}\geq \share{i^*}{\WVS}{v}{\pi'}$. As  we have $\CR(v,\pi)=\CR(v,\pi')$ following Observation~\ref{ob: CR unchanged} and $i$ arrives earlier in $\pi'$, $\share{i}{\WVS}{v}{\pi'}\geq \share{i}{\WVS}{v}{\pi}.$

    \item If $i$ is the marginal player of $\pi$, but not the marginal player of $\pi'$, then $j$ is the marginal player of $\pi'$, and $\CR(v,\pi)=\CR(v,\pi')$. As we mentioned, $i$'s share in $\pi$ is determined by $\CR(\localv[\mcp(v,\pi)],\localpi[\mcp(v,\pi)])$ and $|\CR(\localv[\mcp(v,\pi)],\localpi[\mcp(v,\pi)])|\geq |\CR(v,\pi)|$. Note that $i$ is the last player in $\CR(\localv[\mcp(v,\pi)],\localpi[\mcp(v,\pi)])$, but the penultimate player in $\CR(v,\pi')$, so $\share{i}{\WVS}{v}{\pi}\leq \share{i}{\WVS}{v}{\pi'}$.
    \item If $i$ is the marginal player of $\pi,\pi'$, then following the recursively determined share of $i$, we have $\share{i}{\EGM}{v}{\pi}=\share{i}{\EGM}{v}{\pi'}$ except for the case $\CR(v,\pi)=1<\CR(v,\pi')$. However, as we mentioned, this would only happen when $v$ is unsolvable.
\end{itemize}
\textbf{MOS}.  Given solvable $v$ and $\pi$, let $i$ be the marginal player and $j$ be the penultimate critical player, the proof is equivalent to show $i$'s share is weakly less than $j$'s. Recall Observation~\ref{ob: CR decrease or increase}, we have $\CR(v,\pi)\subseteq\CR(\localv[\mcp(v,\pi)],\localpi[\mcp(v,\pi)])$so the weight of $i$ is less than $j$ and the total weight determining $i$'s share is larger than $j$'s, which is sufficient for the proof.
\end{proof}

\subsection{EVS Maximizes EW}
We prove that EVS maximizes the EW comparing to all mechanisms satisfying SF, OIR, I4EA and MOS.

\begin{proof}[Proof of Theorem~\ref{thm: EGM max EW}]
    
    Let $\pi=[...,j,i,...]$ and $\pi'=[...,i,j,...]$ where $i$ is the marginal player of $\pi$. Following the constraint of I4EA, we require 
    $\share{i}{\mechanism}{v}{\pi}\leq \share{i}{\mechanism}{v}{\pi'}$. Now we discuss by categories as follows: \begin{itemize}
        \item If $i$ is not the marginal player of $\pi'$, we point out that $\EGM$  maximized $\share{i}{}{v}{\pi}$ as all critical players have the same share in $(v,\pi)$. If one mechanism assigns a strictly higher share of $i$, then there exists $i'\in \CR(v,\pi)\setminus\{i\}$ that $\share{i'}{}{v}{\pi}<\share{i}{}{v}{\pi}$, which leads to a contradiction.
        \item If $i$ is the marginal player of $\pi'$, then $\share{i}{}{v}{\pi}$ is maximized by $\EGM$ as it assigns $\share{i}{\mechanism}{v}{\pi} = \share{i}{\mechanism}{v}{\pi'}$.
    \end{itemize}
    Therefore, $\EGM$ maximizes the share of the marginal player recursively, so it maximizes the $\EW$.   
\end{proof}

\subsection{EVS Minimizes Expected SD}

We prove that EVS minimizes the expected SD comparing to all mechanisms satisfying SF, OIR, I4EA and MOS.

\begin{proof}[Proof of Theorem~\ref{thm: EGM min d to SV}]
    Recall the partition $\mathcal{P}(i,\Pi)=\{\Pi_{i0},\Pi_{i1},...,\Pi_{im}\}$ constructed in the Proof of Theorem~\ref{thm: WVS complete} and $\sum_{\pi\in \Pi_{ik}}\share{i}{\mechanism}{v}{\pi}=1$ where $\mechanism$ is SF, OIR, I4EA and MOS. 
    Now, consider the expected $\SD$ of a mechanism $\mechanism$:
    \begin{align*}
        \expectation[\pi]{\SD^\mechanism(v,\pi)}=\frac{1}{n!}\sum_{\pi\in\Pi(N)}\sum_{i\in \pi} (\share{i}{\mechanism}{v}{\pi}-\SV_i(v))^2 \\
        =\frac{1}{n!}\sum_{i\in N}\sum_{\Pi_{ik}\in\mathcal{P}(i,\Pi)}\sum_{\pi\in\Pi_{ik}}(\share{i}{\mechanism}{v}{\pi}-\SV_i(v))^2. \\
    \end{align*}
    The innermost term $\sum_{\pi\in\Pi}(\share{i}{\mechanism}{v}{\pi}-\SV_i(v))^2$ is minimized by EVS. Since $\sum_{\pi\in \Pi_{ik}}\share{i}{\mechanism}{v}{\pi}=1$ and $\SV_i$ is fixed, this term only varies with $\sum_{\pi\in\Pi_{ik}}(\share{i}{\mechanism}{v}{\pi})^2$, i.e., it can be written as
    \[ -2\SV_i(v)+\sum_{\pi\in\Pi_{ik}}\left((\share{i}{\mechanism}{v}{\pi})^2+(\SV_i(v))^2\right) \]
    which is minimized when $\{\share{i}{\mechanism}{v}{\pi}\}_{\pi\in\Pi_{ik}}$ are as equal as possible. Following the requirements of MOS and anonymous, the minimal term in $\{\share{i}{\mechanism}{v}{\pi}\}_{\pi\in\Pi_{ik}}$ is the share when $i$ is the marginal player in $\pi$. Since EVS maximizes the egalitarian share, and other shares in $\{\share{i}{\mechanism}{v}{\pi}\}_{\pi\in\Pi_{ik}}$ are equal to each other, the term $\sum_{\pi\in\Pi}(\share{i}{\mechanism}{v}{\pi}-\SV_i(v))^2$ is minimized. Therefore, EVS minimized expected SD.
    
    
\end{proof}

\section{Acknowledgments}

This work was supported in part by the Science and Technology Commission of Shanghai Municipality (No. 23010503000), the Shanghai Frontiers Science Center of Human-centered Artificial Intelligence (ShangHAI), and JST ERATO (Grant Number JPMJER2301, Japan).

\bibliography{aaai2026}

\end{document}